\newtheorem{theorem}{Theorem}
\newtheorem{lemma}[theorem]{Lemma}
\newtheorem{proposition}[theorem]{Proposition}
\newtheorem{remark}[theorem]{Remark}
\newenvironment{proof}[1][Proof]{\noindent\textbf{#1.} }{\ \rule{0.5em}{0.5em}}
\begin{document}

\begin{center}
{\LARGE Information Percolation: Some General Cases\smallskip \bigskip\
with\medskip\ an Application to Econophysics\bigskip }

BY ALAIN$\ $B\'{E}LANGER$^{\ast }${\large \ }AND{\large \ }GASTON GIROUX$%
^{\dag }${\small \ \bigskip }

\textit{Universit\'{e} de Sherbrooke}
\end{center}

\bigskip 

{\small We describe, at the microscopic level, the dynamics of }$N${\small \
interacting components where the probability is very small when }$N${\small %
\ is large that a given component interact more than once, directly or
indirectly, up to time }$t,${\small \ with any other component. Due to this
fact, we can consider, at the macroscopic level, the quadratic system of
differential equations associated with the interaction and establish an
explicit formula for the solution of this system. We moreover apply our
results to some models of Econophysics.}

\section{Introduction}

In their paper, Duffie-Sun (2012) (see also Duffie (2012)), the authors
provide a mathematical foundation for independent random matching of a large
population. Here we develop an approach, inspired by Kac (1956), where the
random matching is instead asymptotically independent. To do so, we start
with a sequence of dynamical sets of interacting components, one for each
integer $N.$ For these dynamical systems we can show that when $N$ is large
the probability is very small that a component has interacted more than once
directly or indirectly up to time $t$ with any other component. Thanks to
this fundamental property, we can link the microscopic and macroscopic
levels using results from the theory of continuous-time Markov chains.

In section 2, we describe these dynamics with their symmetric interaction
kernels. We consider interactions involving $m$ components, for $m\geq 2,$
and we suppose that the intensities of these dynamics have an adequate
dependence on $N$. Our techniques enable us to obtain an explicit formula
for the associated quadratic system of diffferential equations. We thereby
extend the results first obtained in Duffie-Manso ( 2007) and pursued in
Duffie-Giroux-Manso (2010). We note that our formula is valid for any
interaction kernel and it is more explicit than the one obtained for the
particular kernel considered in the latter article. It enables us, in
particular, to obtain new results for models of Econophysics.\bigskip 

\noindent $\overline{%
\begin{array}{l}
\text{{\small Dated 15 February 2012.}} \\ 
^{\ast }\text{ {\small D\'{e}partement de finance, Universit\'{e} de
Sherbrooke,}}%
\end{array}%
}$

\noindent {\small Sherbrooke, Canada, J1K 2R1. E-mail:
alain.a.belanger@usherbrooke.ca}

\noindent $^{\dag }${\small \ 410 rue de Vimy, apt. 1, Sherbrooke, Canada,
J1J 3M9. E-mail: gasgiroux@hotmail.com\bigskip }

{\small \noindent }\noindent {\small AMS classifications:\ 60G55, 34A34,
82C31.\medskip }

{\small \noindent }\noindent {\small Keywords: Large interacting sets,
market equilibrium, Ordinary Differential Equations, continuous-time Markov
chains, Econophysics.}

\newpage 

The statement and the proof that our formula solves the system of
differential equations are done in section 3. In section 4, we present our
applications to Econophysics. While in section 5, we come back to the
fundamental property of the system's dynamics and obtain several
intermediary results leading to its solution.

\section{\protect\bigskip The dynamics.}

We suppose that all components take their values in a measurable space, $(E,%
\mathcal{E}),$ (one can think of ($%
%TCIMACRO{\U{211d} }%
%BeginExpansion
\mathbb{R}
%EndExpansion
^{d},B(%
%TCIMACRO{\U{211d} }%
%BeginExpansion
\mathbb{R}
%EndExpansion
^{d}))$ and their interactions are given by a symmetric probability kernel $%
Q $ on the product space ($E^{m},\mathcal{E}^{\otimes m}$ ) for $m\geq 2.$
That is: the function $Q(x_{1},x_{2},...x_{m};C_{1}\times \cdot \cdot \cdot
\times C_{m})$ is measurable in $(x_{1},x_{2},...x_{m});$ is a probability
measure in ($C_{1}\times \cdot \cdot \cdot \times C_{m});$ and satisfies $%
Q(x_{1},x_{2},...x_{m};C_{1}\times \cdot \cdot \cdot \times
C_{m})=Q(x_{\sigma (1)},x_{\sigma (2)},...x_{\sigma (m)};C_{\sigma
(1)}\times C_{\sigma (2)}\times \cdot \cdot \cdot \times C_{\sigma (m)})$
for any permutation $\sigma $ of $\{1,2,...,m\}.$

For each integer $N$, we consider an interacting set of $N$ components which
interact by groups of $m$ according to the kernel $Q.$ The interactions
occur at each jump of a Poisson process with intensity $\lambda \frac{N}{m}$%
. \ Groups are undistinguishable so each group has a probability of $\binom{N%
}{m}^{-1}$ of being involved in a given interaction.

We show, in section 5, the fundamental property that enables us to obtain
the system's solution. In the next section, we describe the formula and show
that it is indeed the system's\ solution.

\section{\protect\bigskip The solution of the macroscopic system.}

The kernel $Q$ allows us to describe the macroscopic evolution of the system
with an associated system of quadratic differential equations via the
evolution of the law of a component. This probability law, denoted $\mu
_{t}, $ evolves with time and is in fact the solution of the Cauchy problem:%
\[
\frac{d\mu _{t}}{dt}=\mu _{t}^{\circ _{m}}-\mu _{t}\;;\mu _{0}=\mu 
\]

where

\bigskip 
\[
\mu ^{\circ _{m}}(C)\triangleq \dint\limits_{%
%TCIMACRO{\U{211d} }%
%BeginExpansion
\mathbb{R}
%EndExpansion
^{m}}\mu (dx_{1})\mu (dx_{2})...\mu (dx_{m})Q(x_{1},x_{2},...x_{m};C\times
E^{m-1})\text{ for }C\in \mathcal{E}. 
\]

$\ $ The probability law $\mu ^{\circ _{m}}$is the law of a component after
the interaction of $m$ i.i.d. components with law $\mu .$ We can think of it
as the law at the root of the $m$-ary tree with only one interaction. We
will look at all the trees representing the interaction history of a
component up to time $t$. So for a tree, $A$, with more than one
interaction, we divide the tree in $m$ subtrees at that last interaction and
continue recursively up to time 0 to define $\mu ^{\circ _{m}A}$ . (Please
see figure 1 of section 5 for an example of an interaction tree.) Let $%
\mathbb{A}_{n}$ be the set of all trees with $n$ interactions (a.k.a.
nodes), each node producing $m$ branches. If $A_{n}\in \mathbb{A}_{n}$, then 
$\mu ^{\circ _{m}A_{n}}$ denotes the law obtained by iteration of $\mu
^{\circ _{m}}$ through the successsive node of the tree when we place the
law $\mu $ on each leaf of $A_{n}.$

Now we will show that our Cauchy problem has a unique solution which can be
expressed, by conditioning on the number of interactions up to time $t,$ and
then by the component's history. Such conditionings give us

\[
\mu _{t}=\dsum\limits_{n\geq 0}p_{n}(t)\frac{1}{\#_{m}(n)}%
\dsum\limits_{A_{n}\in \mathbb{A}_{n}}\mu ^{\circ _{m}A_{n}}\text{ \ \ \ \ \
\ }(1) 
\]

where $\#_{m}(n)$ $=\dprod\limits_{k=1}^{n-1}((m-1)k+1)$ is the number of
trees with $n$ nodes, taking into account their branching orders; and $%
p_{n}(t)$ is the probability of having $n$ branchings up to time $t.$

Finally, in order to show that the countable convex sum $(1)$ gives the
solution, we need the following lemma which will be proved in section 5.

\begin{lemma}
$p_{n}(t)=\frac{\#_{m}(n)}{(m-1)^{n}n!}e^{-t}(1-e^{-(m-1)t})^{n}.$
\end{lemma}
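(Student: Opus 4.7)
My strategy is to identify $p_n(t)$ with the time-$t$ distribution of a pure-birth Markov chain describing how the interaction tree of a tagged component grows, and then verify the stated closed-form formula against the resulting Kolmogorov forward equations. Starting from a tagged component at time $t$ and unfolding its history backward in time, a partial tree with $n$ nodes has exactly $1 + (m-1)n$ leaves, each corresponding to an ancestral component whose history has not yet been explored. The fundamental non-repetition property established in Section~5 ensures that in the large-$N$ limit these ancestors are distinct and their subsequent backward evolutions are independent; hence each leaf carries its own $\mathrm{Exp}(1)$ clock for its next interaction. The node-count process $(N_{\sigma})$ is therefore a pure-birth chain on $\{0,1,2,\ldots\}$ starting at $0$, with transition rate $a_n := (m-1)n + 1$ from $n$ to $n+1$. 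Since $a_n$ grows only linearly, the chain is non-explosive and $p_n(t) = \mathbb{P}(N_t = n)$ is the unique probability solution of
\[
p_0'(t) = -a_0\, p_0(t), \qquad p_n'(t) = a_{n-1}\, p_{n-1}(t) - a_n\, p_n(t) \quad (n \geq 1),
\]
subject to $p_0(0) = 1$ and $p_n(0) = 0$ for $n \geq 1$.

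Next I would verify that $f_n(t) := \frac{\#_m(n)}{(m-1)^n n!}\, e^{-t}(1 - e^{-(m-1)t})^n$ satisfies this system. Differentiation naturally produces a term $-f_n(t)$ from the $e^{-t}$ factor plus a term proportional to $(1-e^{-(m-1)t})^{n-1} e^{-(m-1)t}$ from the $(1-e^{-(m-1)t})^n$ factor. Writing $e^{-(m-1)t} = 1 - (1 - e^{-(m-1)t})$ and invoking the telescoping identity $\#_m(n) = a_{n-1}\,\#_m(n-1)$, the latter term splits as $a_{n-1} f_{n-1}(t) - (m-1)n\, f_n(t)$, giving $f_n'(t) = a_{n-1} f_{n-1}(t) - [1 + (m-1)n] f_n(t) = a_{n-1} f_{n-1}(t) - a_n f_n(t)$, as desired. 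A cleaner alternative is to recognise that the generating function $\Phi(t,z) = \sum_{n \geq 0} f_n(t) z^n$ equals $e^{-t}[1 - z(1 - e^{-(m-1)t})]^{-1/(m-1)}$ and to check by the method of characteristics (along $dz/dt = -(m-1)z(z-1)$) that this is the unique solution of the first-order PDE $\partial_t \Phi = (z-1)\Phi + (m-1)z(z-1)\partial_z \Phi$ with $\Phi(0,z) = 1$; the factor $\#_m(n)/((m-1)^n n!)$ then drops out of the generalised binomial expansion.

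The main obstacle is not the ODE verification, which is elementary, but the first step: the pure-birth structure of $N_{\sigma}$ rests on the asymptotic independence of the leaf clocks, which in turn follows from the non-repetition of ancestors as $N \to \infty$. This is precisely the \emph{fundamental property} to which Section~5 is devoted; once it is in hand, the translation into the pure-birth chain with rates $a_n = (m-1)n + 1$ is standard, and the closed-form formula then follows by the routine computation above.
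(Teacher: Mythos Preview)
Your proposal is correct and follows essentially the same route as the paper: both identify $p_n(t)$ as the distribution of the pure-birth chain with rates $a_n=(m-1)n+1$ arising from the Kolmogorov forward system~(4) in Section~5, and then check that the stated closed form solves this system. The only cosmetic difference is that the paper solves the system by induction via the integrating-factor integral $p_t(n)=a_{n-1}e^{-a_n t}\int_0^t e^{a_n s}p_s(n-1)\,ds$, whereas you verify by direct differentiation using the splitting $e^{-(m-1)t}=1-(1-e^{-(m-1)t})$ together with $\#_m(n)=a_{n-1}\#_m(n-1)$; both computations are routine and equivalent.
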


\begin{theorem}
The convex combination, 
\[
\mu _{t}=\dsum\limits_{n\geq 0}p_{n}(t)\frac{1}{\#_{m}(n)}%
\dsum\limits_{A_{n}\in \mathbb{A}_{n}}\mu ^{\circ _{m}A_{n}} 
\]%
is the solution of the Cauchy problem%
\[
\frac{d\mu _{t}}{dt}=\mu _{t}^{\circ _{m}}-\mu _{t}\;;\mu _{0}=\mu . 
\]
\end{theorem}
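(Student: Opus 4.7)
The plan is to verify the initial condition and then differentiate the series term by term, matching the resulting sum against $\mu_t^{\circ_m}-\mu_t$. The initial condition follows at once from Lemma~1 together with the conventions $\#_m(0)=1$ and $\mathbb{A}_0=\{\bullet\}$ (the trivial tree with $\mu^{\circ_m A_0}=\mu$), which yield $p_0(0)=1$ and $p_n(0)=0$ for $n\geq 1$. Write $\nu_n:=\frac{1}{\#_m(n)}\sum_{A_n\in\mathbb{A}_n}\mu^{\circ_m A_n}$, so that $\mu_t=\sum_{n\geq 0}p_n(t)\nu_n$. Since each $\nu_n$ is a probability measure and the explicit formula of Lemma~1 makes $\sum p_n(t)$ and its formal derivative absolutely and uniformly convergent on compact $t$-intervals, term-by-term differentiation is permissible.

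A direct calculation from Lemma~1 gives, for $n\geq 1$,
\[
p_n'(t)+p_n(t)=\frac{\#_m(n)}{(m-1)^{n-1}(n-1)!}\,e^{-mt}\bigl(1-e^{-(m-1)t}\bigr)^{n-1},
\]
while $p_0'(t)+p_0(t)=0$. The theorem therefore reduces to showing
\[
\mu_t^{\circ_m}=\sum_{N\geq 1}\bigl(p_N'(t)+p_N(t)\bigr)\,\nu_N.
\]
To expand the left-hand side I use that $(\mu_1,\dots,\mu_m)\mapsto(\mu_1,\dots,\mu_m)^{\circ_m}$ is symmetric $m$-linear (it is integration against the symmetric kernel $Q$), which after substituting $\mu_t=\sum_n p_n(t)\nu_n$ into each slot gives
\[
\mu_t^{\circ_m}=\sum_{n_1,\dots,n_m\geq 0}\prod_i p_{n_i}(t)\cdot\frac{1}{\prod_i \#_m(n_i)}\sum_{(A_{n_i})}\mu^{\circ_m B(A_{n_1},\dots,A_{n_m})},
\]
where $B(A_{n_1},\dots,A_{n_m})\in\mathbb{A}_N$, with $N:=1+\sum_i n_i$, is the tree obtained by attaching the $A_{n_i}$ as subtrees of a fresh root.

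The pivotal combinatorial step is the recursive decomposition of $\mathbb{A}_N$. Interpreting $\mathbb{A}_n$ as the set of $m$-ary increasing trees (consistent with $\#_m(n)=\prod_{k=1}^{n-1}((m-1)k+1)$), every $B\in\mathbb{A}_N$ is uniquely determined by its ordered subtree-sizes $(n_1,\dots,n_m)$ with $\sum n_i=N-1$, a distribution of the $N-1$ non-root labels among the subtrees (a multinomial factor $\binom{N-1}{n_1,\dots,n_m}$), and the trees $A_{n_i}\in\mathbb{A}_{n_i}$ themselves; in particular $\#_m(N)=\sum_{n_1+\cdots+n_m=N-1}\binom{N-1}{n_1,\dots,n_m}\prod_i \#_m(n_i)$. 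Using $\prod_i n_i!=(N-1)!/\binom{N-1}{n_1,\dots,n_m}$ the scalar prefactor in the display above collapses, and grouping terms by $N$ converts the nested sum into $\frac{\#_m(N)}{(m-1)^{N-1}(N-1)!}e^{-mt}(1-e^{-(m-1)t})^{N-1}\,\nu_N$, which is exactly $(p_N'(t)+p_N(t))\,\nu_N$. Summing over $N\geq 1$ and adding the vanishing $N=0$ term delivers $\mu_t^{\circ_m}=\frac{d\mu_t}{dt}+\mu_t$, as required. The principal obstacle is precisely this combinatorial bookkeeping: one must argue carefully that the symmetry of $Q$ allows each ordered $m$-tuple $(A_{n_1},\dots,A_{n_m})$ to correspond bijectively to a root-decomposition of a tree in $\mathbb{A}_N$, so that the multinomial factor reconstitutes the count $\#_m(N)$ without over- or under-counting.
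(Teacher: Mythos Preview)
Your proof is correct and follows essentially the same route as the paper: term-by-term differentiation of the series (justified by absolute convergence), reduction to the identity $\mu_t^{\circ_m}=\sum_{N\geq 1}(p_N'(t)+p_N(t))\nu_N$, and verification of the latter by expanding $\mu_t^{\circ_m}$ as an $m$-fold product of the series and recognizing the resulting multinomial sum as the root decomposition of trees in $\mathbb{A}_N$. Your explicit statement of the recursion $\#_m(N)=\sum_{n_1+\cdots+n_m=N-1}\binom{N-1}{n_1,\dots,n_m}\prod_i\#_m(n_i)$ and the increasing-tree interpretation make the bookkeeping somewhat cleaner than in the paper, but the argument is the same.
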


\begin{proof}
Since the countable convex sum $(\ast )$ is normally summable, we can
differentiate $\mu _{t}$ term by term to obtain:%
\[
-\mu _{t}+e^{-mt}\dsum\limits_{n\geq 1}n(1-e^{-(m-1)t})^{n-1}\frac{1}{%
(m-1)^{n-1}(n-1)!}\dsum\limits_{A_{n}\in \mathbb{A}_{n}}\mu ^{\circ
_{m}A_{n}} 
\]%
Thus we need to show that: 
\[
\mu _{t}^{\circ _{m}}(C)=e^{-mt}\dsum\limits_{n\geq 0}n(1-e^{-(m-1)t})^{n-1}%
\frac{1}{(m-1)^{n-1}n!}\dsum\limits_{A_{n+1}\in \mathbb{A}_{n+1}}\mu ^{\circ
_{m}A_{n+1}}(C)\text{\ \ \ \ \ }(2) 
\]%
Starting with the definition (on page 3), we have that the LHS of $(2)$ is
equal to 
\begin{eqnarray*}
&&\dint\limits_{%
%TCIMACRO{\U{211d} }%
%BeginExpansion
\mathbb{R}
%EndExpansion
^{m}}\left( \dsum\limits_{i_{1}\geq 0}e^{-t}(1-e^{-(m-1)t})^{i_{1}}\frac{1}{%
(m-1)^{i_{1}}i_{1}!}\dsum\limits_{A_{i_{1}}\in \mathbb{A}_{i_{1}}}\mu
^{\circ _{m}A_{i_{1}}}(dx_{1})\right) ... \\
&&...\left( \dsum\limits_{i_{m}\geq 0}e^{-t}(1-e^{-(m-1)t})^{i_{m}}\frac{1}{%
(m-1)^{i_{m}}i_{m}!}\dsum\limits_{A_{i_{m}}\in \mathbb{A}_{i_{m}}}\mu
^{\circ _{m}A_{i_{m}}}(dx_{m})\right) ... \\
&&\;\ \ \ \ \ \ \ \ \ \ \ \ \ \ \ \ \ \ \ \ \ \ \ \ \ \ \ \ \ \ \ \ \ \ \ \
\ \ \ \ \ \ \ \ \ \ \ \ \ \ \ \ \ \ \ \ \ \ \ \ \
...Q(x_{1},...,x_{m};C\times E^{m-1})
\end{eqnarray*}%
\bigskip which is equal to%
\begin{eqnarray*}
&&\dint\limits_{%
%TCIMACRO{\U{211d} }%
%BeginExpansion
\mathbb{R}
%EndExpansion
^{m}}e^{-mt}\left\{ \dsum\limits_{n\geq
0}(1-e^{-(m-1)t})^{n}\dsum\limits_{i_{1}+...+i_{m}=n}\frac{1}{%
(m-1)^{n}i_{1}!...i_{m}!}\right. ... \\
&&\left. \left( \dsum\limits_{A_{i_{1}}\in \mathbb{A}_{i_{1}}}\mu ^{\circ
_{m}A_{i_{1}}}(dx_{1})\right) ...\left( \dsum\limits_{A_{i_{m}}\in \mathbb{A}%
_{i_{m}}}\mu ^{\circ _{m}A_{i_{m}}}(dx_{m})\right) \right\}
Q(x_{1},...,x_{m};C\times E^{m-1})
\end{eqnarray*}%
which in turn is equal to%
\[
\dint\limits_{%
%TCIMACRO{\U{211d} }%
%BeginExpansion
\mathbb{R}
%EndExpansion
^{m}}e^{-mt}\left( \dsum\limits_{n\geq 0}(1-e^{-(m-1)t})^{n}\frac{1}{%
(m-1)^{n}n!}F(i_{1},...,i_{m},n,\mu ,A_{i_{1}},....,A_{i_{m},}Q,C)\right) 
\]%
where 
\[
F(i_{1},...,i_{m},n,\mu ,A_{i_{1}},....,A_{i_{m},}Q,C)=... 
\]%
\begin{eqnarray*}
&&\dsum\limits_{i_{1}+...+i_{m}=n}\binom{n}{i_{1}}\binom{n-i_{1}}{i_{2}}....%
\binom{i_{m-1}+i_{m}}{i_{m-1}}\left( \dsum\limits_{A_{i_{1}}\in \mathbb{A}%
_{i_{1}}}\mu ^{\circ _{m}A_{i_{1}}}(dx_{1})\right) ... \\
&&...\left( \dsum\limits_{A_{i_{m}}\in \mathbb{A}_{i_{m}}}\mu ^{\circ
_{m}A_{i_{m}}}(dx_{m})\right) Q(x_{1},...,x_{m};C\times E^{m-1})
\end{eqnarray*}%
And this last expression is a decomposition of the trees $A_{n+1}\in \mathbb{%
A}_{n+1}$ appearing in the RHS of $(2)$ in $m$ subtrees after the first node
(taking the branching order into account). The two expressions are therefore
equal and this proves the theorem.
\end{proof}

\begin{remark}
This section brings a simplification to the special case studied in section
3 of Duffie-Giroux-Manso (2010) where the term $a_{(m-1)(n-1)+1}$ can now be
given explicitly as $\frac{1}{(m-1)^{n-1}(n-1)!}$. It also suggests that
some of the results of Duffie-Malamud-Manso (2009) can be extended to the
case where information exchanges involve $m$ agents.
\end{remark}

\begin{remark}
We call the law $\mu _{t}=\dsum\limits_{n\geq 0}e^{-t}(1-e^{-(m-1)t})^{n}%
\frac{1}{(m-1)^{n}n!}\dsum\limits_{A_{n}\in \mathbb{A}_{n}}\mu ^{\circ
_{m}A_{n}}$ an extended Wild sum [11] and note that the convex combination
we obtain for the case $m=2$ is indeed the Wild sum, $\mu
_{t}=\dsum\limits_{n\geq 0}e^{-t}(1-e^{-t})^{n}\frac{1}{n!}%
\dsum\limits_{A_{n}\in \mathbb{A}_{n}}\mu ^{\circ _{m}A_{n}}$, now
well-known in the statistical physics of gases since the work of Kac (1956)
[6].
\end{remark}

\section{An application in Econophysics: kinetic models with random
perturbations}

\bigskip In Ferland-Giroux (1991) the authors study a class of kinetic
equations of Kac's type and they show, for binary collisions, a convergence
to the invariant law at an exponential rate. Bassetti \textit{et al }(2011)
show a similar result using different methods. The convergence in
Ferland-Giroux (1991) is obtained along a set of convenient test functions
with a telescoping technique due to Trotter (1959) and with the use of a
version of Wild sums obtained from judicious conditioning. There is
therefore the possibility of extending the results of
Bassetti-Ladelli-Toscani (2011) for $m\geq 2.$ We do this extension here
only in the context of Ferland-Giroux (1991). For a review of Econophysics,
in the case $m=2,$ one can read, among others, D\"{u}ring-Matthes-Toscani
(2010) \ and Bassetti \textit{et al } (2011) for more recent work.

Let $\mu $ denote a probability law on $%
%TCIMACRO{\U{211d} }%
%BeginExpansion
\mathbb{R}
%EndExpansion
$ and let $S_{m}(\mu )$ denote the law of $%
H_{1}X_{1}+H_{2}X_{2}+...+H_{m}X_{m}$ where the random variables $\{X_{i}\}$
are independent and of law $\mu $ and the variables $\{H_{i}\}$ are
independant of each other and of the $X$ variables. Under very general
conditions, R\"{o}sler (1992) has shown that the transfomation $S_{m}$ has a
fixed point. We will suppose here that the $H$-variables have values in $%
[0,1]$ , that their mean is $\frac{1}{m}$ and that they are not Bernoulli.
We then have $E[H_{1}+H_{2}+...+H_{m}]=1$ and $S_{m}(\mu )$ has the same
first moment as $\mu .$ If we suppose moreover that $%
E[H_{1}^{2}+H_{2}^{2}+...+H_{m}^{2}]<1$ then R\"{o}sler(1992) gives us the
existence of a fixed point, denoted $\gamma $.This fixed point has a second
moment as soon as $\mu $ does. Our goal is to establish the following result.

\begin{theorem}
If we suppose that $\mu $ has a finite second moment then we have that the
law $\mu _{t}=\dsum\limits_{n\geq 0}e^{-t}(1-e^{-(m-1)t})^{n}\frac{1}{%
(m-1)^{n}n!}\dsum\limits_{A_{n}\in \mathbb{A}_{n}}\mu ^{\circ _{m}A_{n}}$
converges to $\gamma $ at the exponential rate $\eta
=1-E[H_{1}^{2}+H_{2}^{2}+...+H_{m}^{2}].$
\end{theorem}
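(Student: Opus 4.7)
The plan is to carry out, for general $m\geq 2$, the Fourier-contraction argument pioneered by Gabetta--Toscani--Wennberg and used for $m=2$ in Bassetti--Ladelli--Toscani (2011). I would work with the weighted Fourier distance
$$d_{2}(\mu ,\nu )\triangleq \sup_{\xi \neq 0}\frac{|\hat{\mu}(\xi )-\hat{\nu}(\xi )|}{\xi ^{2}},$$
which is finite on the set of probability laws that share a common first moment and have a finite second moment. Three preliminary observations are needed. First, the Cauchy problem preserves the first moment, since $S_{m}$ does (this was already used in the paper via $E[H_{1}+\cdots +H_{m}]=1$). Second, R\"{o}sler's theorem supplies a finite second moment for $\gamma$ whenever $\mu$ has one. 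Third, convergence $\mu _{t}\to \gamma$ forces the first moments of $\mu $ and $\gamma$ to coincide; under that assumption $d_{2}(\mu ,\gamma )<\infty$.

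The heart of the argument is the one-step contraction lemma
$$d_{2}(S_{m}(\mu ),S_{m}(\nu ))\leq (1-\eta )\,d_{2}(\mu ,\nu ).$$
I would prove it by writing $\widehat{S_{m}(\mu )}(\xi )=E\bigl[\prod_{i=1}^{m}\hat{\mu}(H_{i}\xi )\bigr]$, likewise for $\nu$, and applying the telescoping identity
$$\prod_{i=1}^{m}a_{i}-\prod_{i=1}^{m}b_{i}=\sum_{k=1}^{m}a_{1}\cdots a_{k-1}\,(a_{k}-b_{k})\,b_{k+1}\cdots b_{m}$$
to $a_{i}=\hat{\mu}(H_{i}\xi )$, $b_{i}=\hat{\nu}(H_{i}\xi )$. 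Since $|\hat{\mu}|,|\hat{\nu}|\leq 1$ and $|\hat{\mu}(H_{k}\xi )-\hat{\nu}(H_{k}\xi )|\leq (H_{k}\xi )^{2}\,d_{2}(\mu ,\nu )$, each telescoping term is bounded in absolute value by $H_{k}^{2}\xi ^{2}\,d_{2}(\mu ,\nu )$; dividing by $\xi ^{2}$, taking expectations, and summing over $k$ yields exactly the factor $E\bigl[\sum_{k}H_{k}^{2}\bigr]=1-\eta$.

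Next I would push this estimate through the evolution. Transforming the Cauchy problem to Fourier variables gives $\partial _{t}\hat{\mu}_{t}=\widehat{S_{m}(\mu _{t})}-\hat{\mu}_{t}$, and subtracting the stationary identity $\widehat{S_{m}(\gamma )}=\hat{\gamma}$ produces a linear equation in time for the error function $\Phi _{t}(\xi )\triangleq (\hat{\mu}_{t}(\xi )-\hat{\gamma}(\xi ))/\xi ^{2}$ whose source, via the same telescoping argument applied to $\mu _{t}$ and $\gamma $, is bounded uniformly in $\xi$ by $(1-\eta )D(t)$ with $D(t)\triangleq \sup _{\xi }|\Phi _{t}(\xi )|=d_{2}(\mu _{t},\gamma )$. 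Duhamel's formula followed by the $\xi$-supremum then delivers the integral inequality
$$D(t)\leq e^{-t}D(0)+(1-\eta )\int_{0}^{t}e^{-(t-s)}D(s)\,ds,$$
which, through Gronwall applied to $e^{t}D(t)$, collapses to $D(t)\leq e^{-\eta t}D(0)$, the stated exponential rate.

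The main obstacle I expect is not the contraction estimate but the bookkeeping around $d_{2}$: ensuring $D(t)<\infty $ uniformly in $t$ (propagation of the second moment along the flow, most easily seen by differentiating $\int x^{2}\,d\mu _{t}$ and using $E[\sum H_{k}^{2}]<1$ to close a linear ODE), legitimizing the interchange of the $\xi$-supremum with the time integral, and finally translating a $d_{2}$-bound into whatever mode of convergence the statement is read in (weak convergence and convergence of second moments both follow since $d_{2}$ dominates a Toscani-type metric and second moments stay bounded along the flow). A parallel route, which I would keep in reserve in case propagation of moments proves delicate, is to apply the one-step contraction tree-by-tree inside the Wild sum and then resum with the weights from Lemma~1; the PDE route is however more economical.
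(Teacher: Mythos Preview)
Your argument is correct but follows a genuinely different route from the paper's. You run the Gabetta--Toscani--Wennberg contraction in the Fourier metric $d_{2}$ directly on the Cauchy problem, obtaining the one-step bound $d_{2}(S_{m}(\mu),S_{m}(\nu))\leq (1-\eta)\,d_{2}(\mu,\nu)$ via the product telescope on characteristic functions, and then closing with Duhamel--Gronwall on $D(t)=d_{2}(\mu_{t},\gamma)$. The paper instead stays in real space and works against test functions $f\in C_{b}^{2}$: it telescopes leaf-by-leaf on each tree $A_{n}$ using Trotter's device and a second-order Taylor expansion, which produces the bound $|\langle\mu^{\circ_{m}A_{n}},f\rangle-\langle\gamma,f\rangle|\leq c\,E\bigl[\sum_{u\in\widehat{A}_{n}}C_{u}^{2}\bigr]$, then averages these over $\mathbb{A}_{n}$ to get quantities $e_{n}$ satisfying the recursion $e_{n}=\frac{1-\eta}{(m-1)n}(1+e_{1}+\cdots+e_{n-1})$, hence $e_{n}\leq c\,n^{a-1}$ with $a=\frac{1-\eta}{m-1}$, and finally resums against the Wild weights $e^{-t}(1-e^{-(m-1)t})^{n}/((m-1)^{n}n!)$ to recover the rate $e^{-\eta t}$.

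What each buys: your PDE route is shorter and sidesteps all the tree combinatorics; it is essentially the Bassetti--Ladelli--Toscani argument (which the paper cites as an alternative for $m=2$) carried to general $m$. The paper's approach, following Ferland--Giroux, makes the Wild-sum structure do the work and yields convergence tested against $C_{b}^{2}$ directly, at the cost of the auxiliary recursion for $e_{n}$ and a careful count of how subtrees with $k$ nodes sit inside $\mathbb{A}_{n}$. Your ``reserve'' route---contracting tree-by-tree inside the Wild sum---is in fact much closer to what the paper actually does. One small point: your remark that ``convergence $\mu_{t}\to\gamma$ forces the first moments to coincide'' is circular as stated; you should simply note that $S_{m}$ preserves the mean and that R\"osler's fixed point is constructed with the same first moment as $\mu$, so $d_{2}(\mu,\gamma)<\infty$ holds from the outset.
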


\noindent Our proof is an extension of Ferland-Giroux (1991) which treats
the case $m=2.$

\begin{proof}
Let us first consider the tree with $m$ leaves, denoted $A_{1}.$ On each one
of its leaves put independent random variables of law $\mu $. Assume that
these variables interact at a node to give $%
H_{1}X_{1}+H_{2}X_{2}+...+H_{m}X_{m}.$ Let us call $\mu ^{\circ _{m}A_{1}}$
, or more simply $\mu ^{\circ _{m}},$ the law of this variable. In a similar
fashion, we can consider $\gamma ^{\circ _{m}}$(which is $\gamma $ since it
is a fixed point)$.$ We will first consider the differences $|<\mu ^{\circ
_{m}},f>-<\gamma ^{\circ _{m}},f>|$ for each $f\in C_{b}^{2}.$ One \ way to
bound this difference is to use the telescoping technique of Trotter (1959)
where we replace one by one (from the left say) the variables with law $\mu $
by variables with law $\gamma $. We then obtain a sum of $m$ terms of the
form $|<\gamma ^{\circ _{k}}\circ \mu ^{\circ _{m-k}},f>-<\gamma ^{\circ
_{k+1}}\circ \mu ^{\circ _{m-k-1}},f>|$ which we will bound. In our
particular models we can write down these expressions \ explicitly as%
\begin{eqnarray*}
&&|E\left[
f(H_{1}Y_{1}+H_{2}Y_{2}+...+H_{k}Y_{k}+H_{k+1}X_{k+1}+...+H_{m}X_{m})\right]
- \\
&&E\left[
f(H_{1}Y_{1}+H_{2}Y_{2}+...+H_{k}Y_{k}+H_{k+1}Y_{k+1}+H_{k+2}X_{k+2}+...+H_{m}X_{m})%
\right] |
\end{eqnarray*}%
where $Y_{i}:i=1,...k+1$ have law $\gamma $ and $X_{j}$, with $j=k+1,...,m,$
have law $\mu .$ Let $%
R_{k}=H_{1}Y_{1}+H_{2}Y_{2}+...+H_{k}Y_{k}+H_{k+2}X_{k+2}+...+H_{m}X_{m}.$
Then we have 
\[
f(R_{k}+H_{k+1}X_{k+1})=f(R_{k})+f^{\prime }(R_{k})H_{k+1}X_{k+1}+f^{\prime
\prime }(R_{k}^{\ast })\left( H_{k+1}X_{k+1}\right) ^{2} 
\]%
and 
\[
f(R_{k}+H_{k+1}Y_{k+1})=f(R_{k})+f^{\prime }(R_{k})H_{k+1}Y_{k+1}+f^{\prime
\prime }(R_{k}^{\ast \ast })\left( H_{k+1}Y_{k+1}\right) ^{2}. 
\]%
Which in turn give us
\end{proof}

\[
|<\mu ^{\circ _{m}},f>-<\gamma ^{\circ _{m}},f>|\leq cE\left[
\dsum\limits_{i=1}^{m}H_{i}^{2}\right] . 
\]%
We now need to iterate the process according to the different trees. Let $%
\widehat{A}_{n}$ denote the set of leaves of the tree $A_{n}.$ Then the
contribution of a leaf $u\in \widehat{A}_{n}$ through its interactions down
to the bottom of the tree is a product of the variables $H_{i}$ with $%
i=1,...,m.$ Let us denote this product by $C_{u}.$ For the tree $A_{n}$, the
result of its interactions through the bottom of the tree will therefore go
from $\dsum\limits_{u\in \widehat{A}_{n}}C_{u}X_{u},$ when all the variables
put on leaves have law $\mu ,$ to $\dsum\limits_{u\in \widehat{A}%
_{n}}C_{u}Y_{u}$ when all the variables have law $\gamma .$ Applying the
same techniques as above, namely a (longer) telescoping and a Taylor series
expansion, we get 
\[
|<\mu ^{\circ _{m}A_{n}},f>-<\gamma ^{\circ _{m}},f>|\leq cE\left[
\dsum\limits_{u\in \widehat{A}_{n}}C_{u}^{2}\right] 
\]%
Let $e_{n}=\frac{1}{(m-1)^{n}n!}\dsum\limits_{A_{n}\in \mathbb{A}_{n}}E\left[
\dsum\limits_{u\in \widehat{A}_{n}}C_{u}^{2}\right] .$ If we decompose $%
A_{n} $ in its $m$ subtrees from its first node we can apply a similar
reasoning to Ferland-Giroux (1991) in order to obtain that $e_{n}\leq
cn^{a-1}$ where $a=\frac{1-\eta }{m-1}$ and $\eta $ is such that $1-\eta =E%
\left[ \dsum\limits_{i=1}^{m}H_{i}^{2}\right] .$

We now reformulate these assertions in propositions and and give their proof.

\begin{lemma}
We have $e_{n}=\left( \frac{1}{n}\right) \left( \frac{1-\eta }{m-1}\right)
(1+e_{1}+....+e_{n-1}).$ Therefore $e_{n}\leq cn^{a-1}$ with $a=\frac{1-\eta 
}{m-1}.$
\end{lemma}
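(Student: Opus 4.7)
The plan is to derive the claimed recursion by decomposing each tree in $\mathbb{A}_n$ at its root, and then solve it by generating functions. First I would set $b_n = \sum_{A_n \in \mathbb{A}_n} E[\sum_{u \in \widehat{A}_n} C_u^2]$, so that $e_n = b_n/((m-1)^n n!)$ with $b_0 = 1$ (the tree with no nodes is a single leaf and $C_u = 1$). The key observation is that a tree $A_n$ with $n \geq 1$ internal nodes decomposes at its root into an ordered $m$-tuple of subtrees $(A_{i_1}, \ldots, A_{i_m})$ with $i_1 + \cdots + i_m = n-1$. Because the chronological order of the $n$ interactions must be respected---this is precisely why $\#_m(n) = \prod_{k=1}^{n-1}((m-1)k+1)$ exceeds the corresponding Catalan-type count---the decomposition also requires a multinomial assignment $\binom{n-1}{i_1,\ldots,i_m}$ of labels among the subtrees. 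For any leaf $u$ of the $j$-th subtree one has $C_u = H_j^{(\mathrm{root})} C_u^{(A_{i_j})}$ with the root-level $H_j$'s independent of everything below, so that $E[C_u^2] = E[H_j^2]\,E[(C_u^{(A_{i_j})})^2]$. Collecting these yields the preliminary recursion
\[
b_n = \sum_{i_1+\cdots+i_m=n-1}\binom{n-1}{i_1,\ldots,i_m}\sum_{j=1}^m E[H_j^2]\,b_{i_j}\prod_{k\neq j}\#_m(i_k).
\]

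The next step is to collapse the inner multinomial sum. I would introduce $\widetilde{N}(x) = \sum_n \#_m(n)\,x^n/n!$; the analogous root-decomposition for $\#_m$ itself gives $\widetilde{N}' = \widetilde{N}^m$ with $\widetilde{N}(0) = 1$, whence $\widetilde{N}(x) = (1-(m-1)x)^{-1/(m-1)}$ and therefore $\widetilde{N}(x)^{m-1} = (1-(m-1)x)^{-1}$. Reading off coefficients yields the identity
\[
\sum_{i_1+\cdots+i_{m-1}=r}\binom{r}{i_1,\ldots,i_{m-1}}\prod_k \#_m(i_k) = r!\,(m-1)^r.
\]
Applying this to each $j$-term by permutation symmetry in the remaining $m-1$ indices, together with $\sum_j E[H_j^2] = 1-\eta$ and the substitution $b_i/(i!(m-1)^i) = e_i$, simplifies the recursion to $b_n = (1-\eta)(n-1)!\,(m-1)^{n-1}\sum_{i=0}^{n-1} e_i$. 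Dividing by $(m-1)^n n!$ gives exactly $e_n = \frac{1-\eta}{(m-1)n}(1 + e_1 + \cdots + e_{n-1})$, which is the stated recursion.

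For the asymptotic bound with $a = (1-\eta)/(m-1)$, write $S_{n+1} := 1 + e_1 + \cdots + e_n$; then the recursion reads $S_{n+1} = (1 + a/n) S_n$, hence $S_{n+1} = \prod_{k=1}^n (1+a/k) = \Gamma(n+1+a)/(\Gamma(1+a)\,n!) \sim n^a/\Gamma(1+a)$ by Stirling. Equivalently, the generating function $E(x) = \sum e_n x^n$ solves $xE'(x) = axE(x)/(1-x)$ with $E(0)=1$, giving $E(x) = (1-x)^{-a}$ and the closed form $e_n = a(a+1)\cdots(a+n-1)/n! \sim n^{a-1}/\Gamma(a)$. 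In particular $e_n \leq c\,n^{a-1}$ for a suitable constant $c$. The main obstacle I anticipate is the bookkeeping of the first paragraph---keeping straight which factors arise from the multinomial over labels, the product over subtree counts, and the expectation of the $H_j$'s; once the identity $\widetilde{N}^{m-1} = (1-(m-1)x)^{-1}$ is at hand, everything reduces to a routine ODE or Gamma-function calculation.
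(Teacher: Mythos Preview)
Your proof is correct and rests on the same root-decomposition idea as the paper: split each $A_n$ at its first node into $m$ subtrees, use $C_u = H_j C_u^{(A_{i_j})}$ to factor the expectations, and collapse the remaining combinatorial sum. The only substantive differences are in packaging. The paper obtains the count of $(m-1)$-forests with $r$ time-ordered nodes as $(m-1)^r r!$ by a direct build-up argument (at the $j$-th insertion there are $j(m-1)$ leaves to branch), which is exactly your coefficient identity $\sum \binom{r}{i_1,\ldots,i_{m-1}}\prod \#_m(i_k)=r!(m-1)^r$; you instead derive it from $\widetilde N' = \widetilde N^{\,m}$ and $\widetilde N^{\,m-1}=(1-(m-1)x)^{-1}$. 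For the bound $e_n\le c\,n^{a-1}$ the paper simply cites Lemma~3 of Ferland--Giroux (1991), whereas you solve the recursion outright and obtain the closed form $e_n=\Gamma(n+a)/(\Gamma(a)\,n!)$, equivalently $E(x)=(1-x)^{-a}$; this is a genuine improvement in self-containment and gives the exact constant in the asymptotic.
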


\begin{proof}
The decomposition of each tree $A_{n}$ in $m$ subtrees $\left\{
A_{n}^{i}\right\} _{i=1}^{m}$ at the first node enables us to write $e_{n}$
as a sum of $m$ similar terms%
\[
f_{i}=\frac{1}{(m-1)^{n}n!}\dsum\limits_{A_{n}\in \mathbb{A}_{n}}E\left[
\dsum\limits_{u\in \widehat{A^{i}}_{n}}C_{u}^{2}\right] 
\]
where $\widehat{A^{i}}_{n}$ is the set of leaves of the $i^{th}$ subtree. It
suffices to treat the case $i=1.$ Let us decompose $\mathbb{A}_{n},$ the set
of trees with $n$ nodes, by the number, $k$, of nodes of the subtree $%
A_{n}^{1}.$ There are $\binom{n-1}{k}$ $(m-1)^{n-1-k}(n-1-k)!$ such trees.
Indeed, since we need to take into account the order of appearance of these
nodes we have $\binom{n-1}{k}$ choices for the appearances of $A_{n}^{1}$'s
nodes. Then we have $(m-1)^{n-1-k}$ ways to divide the remaining nodes in
the other $m-1$ trees and finally, there are $(n-1-k)!$ choices for the
nodes' appearances. Note that this number simplifies to $\frac{(n-1)!}{k!}%
(m-1)^{n-1-k}.$ If we denote by $\mathbb{A}_{n,k}$ the subset of $\mathbb{A}%
_{n}$ formed by the trees $A_{n}$ for which their subtree $A_{n}^{1}$ has $k$
nodes, we can then express $f_{i}$ as 
\begin{eqnarray*}
f_{i} &=&\frac{1}{(m-1)^{n}n!}\dsum\limits_{k=0}^{n-1}\dsum\limits_{A_{n}\in 
\mathbb{A}_{n,k}}E\left[ \dsum\limits_{u\in \widehat{A^{i}}_{n}}C_{u}^{2}%
\right] \\
&=&\frac{E\left[ H_{i}^{2}\right] }{(m-1)n}\dsum\limits_{k=0}^{n-1}\frac{1}{%
(m-1)^{k}k!}\dsum\limits_{A_{k}\in \mathbb{A}_{n,k}}E\left[
\dsum\limits_{u\in \widehat{A^{i}}_{k}}C_{u}^{2}\right] \\
&=&\frac{E\left[ H_{i}^{2}\right] }{(m-1)n}\dsum\limits_{k=0}^{n-1}e_{k}%
\text{ , with }e_{0}=1.
\end{eqnarray*}%
This proves the first assertion. A similar calculation to the proof of lemma
3 in Ferland-Giroux (1991) gives us the second result.
\end{proof}

\begin{proposition}
For all $f\in C_{b}^{2}$ we have $|<\mu _{t},f>-<\gamma ,f>|\leq ce^{-\eta
t} $
\end{proposition}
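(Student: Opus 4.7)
The plan is to exploit the Wild-sum representation of $\mu_t$ together with the fact that, since $\gamma$ is the fixed point of $S_m$, one has $\gamma^{\circ_m A_n}=\gamma$ for every tree $A_n$ (by induction on $n$ starting from $\gamma^{\circ_m}=\gamma$). First I would use Lemma~1 to observe that the coefficients appearing in the expansion of $\mu_t$ sum to one, which gives the identity
\[
\langle \gamma, f\rangle = \sum_{n \geq 0} e^{-t}(1-e^{-(m-1)t})^n \frac{1}{(m-1)^n n!} \sum_{A_n \in \mathbb{A}_n} \langle \gamma^{\circ_m A_n}, f\rangle.
\]
Subtracting this term by term from the analogous expression for $\langle \mu_t, f\rangle$ and applying the triangle inequality yields
\[
|\langle \mu_t - \gamma, f\rangle| \leq e^{-t}\sum_{n \geq 0}(1-e^{-(m-1)t})^n \frac{1}{(m-1)^n n!}\sum_{A_n \in \mathbb{A}_n}|\langle \mu^{\circ_m A_n} - \gamma^{\circ_m A_n}, f\rangle|.
\]

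Second, I would plug in the per-tree bound $|\langle \mu^{\circ_m A_n}, f\rangle - \langle \gamma, f\rangle| \leq c\,E\!\left[\sum_{u\in \widehat{A}_n}C_u^2\right]$ established immediately before Lemma~5 via iterated Trotter telescoping and second-order Taylor expansion (this is where $f\in C_b^2$ enters, with $c$ depending on $\|f''\|_\infty$). This collapses the inner double sum over trees into the quantity $e_n$ of Lemma~5, so that
\[
|\langle \mu_t - \gamma, f\rangle| \leq c\,e^{-t}\sum_{n\geq 0}(1-e^{-(m-1)t})^n e_n \leq c'\,e^{-t}\sum_{n\geq 1}(1-e^{-(m-1)t})^n n^{a-1},
\]
where, by Lemma~5, $a=(1-\eta)/(m-1)$, which lies in $(0,1)$ under the standing hypotheses $0<E[\sum_i H_i^2]<1$ and $m\geq 2$.

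Third, it remains to extract the exponential rate from this generating function. The key estimate is the one-sided bound $\sum_{n\geq 1}x^n n^{a-1}\leq C_a(1-x)^{-a}$ valid for all $0\leq x<1$ and $0<a<1$, proved either by comparison with the integral $\int_0^\infty s^{a-1}e^{-s(1-x)}\,ds=\Gamma(a)(1-x)^{-a}$ or by splitting the summation range at $n\approx 1/(1-x)$. Applied with $x=1-e^{-(m-1)t}$, so that $(1-x)^{-a}=e^{(m-1)at}=e^{(1-\eta)t}$, this gives
\[
|\langle \mu_t - \gamma, f\rangle| \leq c''\,e^{-t}\cdot e^{(1-\eta)t}=c''\,e^{-\eta t},
\]
as required. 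The main obstacle is precisely this last step: one must verify that the singular growth $(1-x)^{-a}$ produced by the count of interaction trees cancels the exponential factor $e^{-t}$ down to exactly $e^{-\eta t}$, with a constant uniform in $t\geq 0$. The small-$t$ regime is harmless because $|\langle \mu_t - \gamma, f\rangle|\leq 2\|f\|_\infty$ already absorbs into a constant times $e^{-\eta t}$, so the delicate part is really the uniform control of the tail of the series.
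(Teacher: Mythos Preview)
Your argument is correct and is precisely the content of the proof the paper defers to Ferland--Giroux (1991): expand $\mu_t$ and $\gamma$ via the extended Wild sum, apply the per-tree Trotter/Taylor bound to produce $e_n$, invoke Lemma~5 to get $e_n\le c\,n^{a-1}$ with $a=(1-\eta)/(m-1)$, and then estimate $\sum_{n\ge1}(1-e^{-(m-1)t})^n n^{a-1}\le C(1-x)^{-a}=Ce^{(1-\eta)t}$, which combined with the prefactor $e^{-t}$ yields $e^{-\eta t}$. This is exactly the paper's instruction ``follow Theorem~3 of Ferland--Giroux with $1-\eta$ replaced by $(1-\eta)/(m-1)$'', so there is no methodological difference to report.
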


\begin{proof}
Once again, it suffices to follow the proof of theorem 3 in Ferland-Giroux
(1991) with the extended Wild sum and replacing $1-\eta $ by $\frac{1-\eta }{%
m-1}.$
\end{proof}

\section{\protect\bigskip The extended Wild sums.}

\subsection{\protect\bigskip When graphs become trees}

In all our cases, we have an underlying market structure which is a Kac walk
with interactions involving $m$ agents$.$ We add exponential times to obtain
a marked Poisson process whose marks are horizontal lines linking the agents
participating in a given interaction. This enables us to describe the limit
law of an agent, under an appropriate conditioning, as a countable convex
combination on trees which is, as we have shown in section 3, the global
solution of the associated differential equation on the space of probability
laws.

Here we explain how we came to that convex combination. We start our study
by an analysis of the dynamics of the intrinsic structure of the large set
of interacting agents when the number of agents increases. We assume that
each interaction involves $m$ agents, $m\geq 2$. More specifically, we
consider a set of $N$ agents whose interactions happen at unexpected times
so these interactions' occurrences follow a Poisson process$.$ Since agents
are interchangeable, each group has an equal probability of meeting of $%
\left( 
\begin{array}{c}
N \\ 
m%
\end{array}%
\right) ^{-1}.$ If we suppose the intensity of the meetings to be $\frac{N}{m%
}$ then each agent has a meeting rate $\lambda $ which can be assumed to
equal $1$ under a time change.

For $N$ fixed and starting at time $0$, we assign a vertical position to
each agent. The down movement represents the passage of time, see figure 1
on page 9. Each time a group of agents interacts, we draw a horizontal line
between those agents and we draw a vertical line at each agent's position
connecting $0$ to the horizontal line just drawn, so we see a random graph
being formed. When we stop this graph at time $t$, we obtain the finite
graph of all interactions that have taken place. Moreover, the history up to
time $t$ of a given agent, call it $P$, is described by the random sub-graph
connecting all agents who have interacted directly or indirectly with $P$. A
sample history of $P$'s meetings/interactions may look like figure 1 below.

\FRAME{ftbpF}{4.9078in}{6.832in}{0in}{}{}{dessinotctree.gif}{\special%
{language "Scientific Word";type "GRAPHIC";maintain-aspect-ratio
TRUE;display "USEDEF";valid_file "F";width 4.9078in;height 6.832in;depth
0in;original-width 8.7502in;original-height 12.1982in;cropleft "0";croptop
"1";cropright "1";cropbottom "0";filename
'DessinOTCtree.gif';file-properties "XNPEU";}}

The number of meetings is random but we can condition on it. The law of the
finite graph is reversible since the meeting times are uniform on $[0,t]$.
We want to show that a random graph representing the history of $P$ can be
replaced by a random tree as the number of agents, $N$, grows. If we look at
figure 1, we see that the inclusion in the second meeting of one of the
investors having participated in the first one, or the inclusion in the
third meeting of an investor from the first or second one would create a
cycle in our subgraph. As $N$ grows though, the chance of meeting an
investor previously encountered directly or indirectly tends to zero.

To see this, let us consider the subgraph of $P$'s history up to time $t$.
Starting at time $t,$ \ we pursue each one of the encountered vertical lines
in $P$'s history backward in time until we reach the next horizontal line.
If the inclusion of the horizontal line in our graph does not create a cycle
(i.e. no pair of investors were involved directly or indirectly in a
previous meeting) we include the line, if not we remove it. Proceeding in
this fashion up to time $0$ we get a tree with $n$ branchings, say, which
has the same law as the law of a tree obtained by a pure-birth process.
Namely, the tree starting at $P^{\prime }$s vertical line at time $t$ with
intensity $1$ and which at time $0$ has intensity $(m-1)n+1$ and that same
number of leaves. Between two branchings of this process a graph
representing $P$'s meeting history can have a random number of additional
horizontal lines following a Poisson law of parameter at most $\frac{N}{m}%
\left( \left( 
\begin{array}{c}
(m-1)n+1 \\ 
2%
\end{array}%
\right) \left( 
\begin{array}{c}
N \\ 
m%
\end{array}%
\right) ^{-1}\right) $. We will now bound the expectation of these
supplementary horizontal lines by a majorant which tends to $0$ as $N$
increases. Indeed, since the mean number of redundant lines when there are $%
n $ branchings up to time $t$ is at most $\frac{N}{m}\left( \left( 
\begin{array}{c}
(m-1)n+1 \\ 
2%
\end{array}%
\right) \left( 
\begin{array}{c}
N \\ 
m%
\end{array}%
\right) ^{-1}\right) ,$ we have that the mean number of redundant horizontal
lines is bounded above by $\dsum\limits_{n\geq 0}\frac{N}{m}\left( \left( 
\begin{array}{c}
(m-1)n+1 \\ 
2%
\end{array}%
\right) \left( 
\begin{array}{c}
N \\ 
m%
\end{array}%
\right) ^{-1}\right) p_{N,n}(t)$, where $p_{N,n}(t)$ is the probability of
having $n$ branchings up to time $t$ of the pure birth process with
successive branching waiting times following exponential laws of parameter 
\[
\lambda _{N,n}=\frac{N}{m}(\left( m-1)n+1\right) \left( 
\begin{array}{c}
N-((m-1)n+1) \\ 
m-1%
\end{array}%
\right) \left( 
\begin{array}{c}
N \\ 
m%
\end{array}%
\right) ^{-1} 
\]

\bigskip Since 
\begin{eqnarray*}
&&\frac{N}{m}(\left( m-1)n+1\right) \left( 
\begin{array}{c}
N-((m-1)n+1) \\ 
m-1%
\end{array}%
\right) \left( 
\begin{array}{c}
N \\ 
m%
\end{array}%
\right) ^{-1} \\
&=&\frac{((m-1)n+1)\binom{N-((m-1)n+1)}{m-1}}{\binom{N-1}{m-1}}\text{ \ \ \
\ \ \ \ \ \ \ \ }(3) \\
&\leq &(m-1)n+1
\end{eqnarray*}%
then $p_{N,n}(t)$ is stochastically smaller than the law obtained with the
intensities $\lambda _{n}=(m-1)n+1,$ which in turn are less than the
intensities $\overline{\lambda }_{n}=m(n+1).$ Its transition kernel is then
obtained by solving Kolmogorov's affine system of equations:

\begin{eqnarray*}
\frac{d\overline{p}_{t}(0)}{dt} &=&-\overline{p}_{t}(0)\text{ \ \ \ \ \ \ \
\ \ \ \ \ \ \ \ \ \ \ \ \ \ \ \ \ \ \ \ \ \ \ \ \ \ \ \ \ }\ \text{\ \ \ \ \
\ \ \ \ \ \ \ \ \ \ } \\
\frac{d\overline{p}_{t}(n)}{dt} &=&mn\overline{p}_{t}(n-1)-m(n+1)\overline{p}%
_{t}(n)\text{ \ ; }n\geq 1\text{.}
\end{eqnarray*}%
Thus the latter intensities give us a geometric law $\overline{p}%
_{t}(n)=e^{-mt}(1-e^{-mt})^{n}\ $. Since geometric laws have finite moments
of all orders, the mean number of redundant horizontal lines is bounded
above by a quantity converging to $0$.

For more details on Kolmogorov systems of equations for pure birth processes
we refer the reader to Lefebvre (2006), for instance.

Thus, after having specified the initial agents' states and their
interaction kernels, we can approximate $P^{\prime }s$ law using the tree
obtained from removing all redundant horizontal lines from its graph. We
will use this fact in the next sub-section.

\subsection{Limit countable convex combination}

\bigskip We will now show that these random trees whose branching
intensities depend on $N$ can be approximated by trees with branching
intensities independent of $N$. Taking into account that $P$'s tree history
is random with intensities depending on $N,$ we could write $P$'s law,
denoted by $\mu _{t}^{\ast ,N},$ with complex formulae depending on $N.$
Since our markets have a large number of investors, it is preferrable
instead to work with the limit of these laws. We note from $(3)$ above that $%
\lambda _{N,n}\rightarrow ((m-1)n+1).$

\bigskip Let $p_{n}(t)$ $(\triangleq p_{t}(n))$ be the solution of the
affine Kolmogorov system of equations: 
\begin{eqnarray*}
\frac{dp_{t}(0)}{dt} &=&-p_{t}(0)\text{ \ \ \ \ \ \ \ \ \ \ \ \ \ \ \ \ \ \
\ \ \ \ \ \ \ \ \ \ \ \ \ }(4) \\
\frac{dp_{t}(n)}{dt} &=&((m-1)(n-1)+1)p_{t}(n-1)-((m-1)n+1)p_{t}(n)\text{ \
; }n\geq 1\text{.}
\end{eqnarray*}

Recall fron the first section that $\mu _{t}=\dsum\limits_{n\geq 0}p_{n}(t)%
\frac{1}{\#_{m}(n)}\dsum\limits_{A_{n}\in \mathbb{A}_{n}}\mu ^{\circ
_{m}A_{n}}$.

\begin{proposition}
The sequence of \ laws $\mu _{t}^{\ast ,N}$ converges to $\mu _{t}$ as $N$
increases.
\end{proposition}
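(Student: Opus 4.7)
\medskip
\noindent\textbf{Proof plan.}
The plan is to decompose $\mu_t^{\ast,N}$ into a ``tree part'' that mirrors the structure of $\mu_t$ and an ``error part'' coming from the redundant horizontal lines identified in Subsection 5.1, and then to exploit two ingredients already established there: the expected number of redundant lines tends to $0$ as $N\to\infty$, and the pure-birth intensities $\lambda_{N,n}$ converge to $(m-1)n+1$ for every fixed $n$.

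Concretely, I would first write
\[
\mu_t^{\ast,N} \;=\; \sum_{n\geq 0} p_{N,n}(t)\,\frac{1}{\#_m(n)}\sum_{A_n\in\mathbb{A}_n}\mu^{\circ_m A_n}\;+\;r_N(t),
\]
where $p_{N,n}(t)$ is the distribution of the number of branchings of $P$'s reduced (pure-birth) tree already identified in Subsection 5.1. On the event that $P$'s history contains no redundant line, exchangeability of the $N$ agents implies that, conditional on there being $n$ branchings, each of the $\#_m(n)$ ordered trees $A_n\in\mathbb{A}_n$ is equally likely, and the law of $P$'s component at time $t$ is then exactly $\mu^{\circ_m A_n}$: the leaves of the reduced tree correspond to agents that have never interacted (directly or indirectly) with $P$'s ancestry, so their initial states are independent samples of $\mu$ which are then processed through the successive interaction kernels $Q$ encoded by $A_n$. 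The remainder $r_N(t)$ is a signed measure of total-variation norm at most twice the probability of seeing at least one redundant line, itself bounded by the expected number of such lines, a quantity shown in Subsection 5.1 to tend to $0$.

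The second step is to prove $p_{N,n}(t)\to p_n(t)$ for each $n$. This follows by induction on $n$ from the continuous dependence of the affine Kolmogorov system $(4)$ on its coefficients, using $\lambda_{N,n}\to(m-1)n+1$ from equation $(3)$; the geometric dominating law $\overline{p}_t(n)=e^{-mt}(1-e^{-mt})^n$ used in Subsection 5.1 provides a uniform $\ell^{1}$-envelope, so Scheff\'e's lemma upgrades pointwise convergence of these two probability distributions on $\mathbb{N}$ to $\sum_n|p_{N,n}(t)-p_n(t)|\to 0$.

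Finally, since each inner average $\frac{1}{\#_m(n)}\sum_{A_n}\mu^{\circ_m A_n}$ is itself a probability measure (hence of total-variation norm~$1$), the triangle inequality yields
\[
\bigl\|\mu_t^{\ast,N}-\mu_t\bigr\|_{\mathrm{TV}}\;\leq\;\sum_{n\geq 0}|p_{N,n}(t)-p_n(t)|\;+\;\|r_N(t)\|_{\mathrm{TV}}\;\longrightarrow\;0.
\]
The main obstacle is the first step: one must justify that, on the ``no redundant line'' event and conditional on the reduced tree $A_n$, the variables attached to the leaves are genuinely independent copies of $\mu$ and not merely exchangeable. This is where one has to carefully combine the initial i.i.d.\ configuration of the $N$ agents with the fact that, in the absence of redundant lines, distinct leaves of $A_n$ correspond to distinct fresh agents whose initial states were never disturbed inside $P$'s history.
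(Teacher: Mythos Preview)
Your argument is correct and follows essentially the same route as the paper: represent both laws as tree sums, prove $p_{N,n}(t)\to p_n(t)$, and use that the inner tree-averages are probability measures to control the difference. The paper is slightly more streamlined in that it takes $\mu_t^{\ast,N}$ to already be the pure tree law (the redundant-line error having been handled in Subsection~5.1, so your $r_N(t)$ term is not needed here), invokes Kurtz (1969) directly for $p_{N,n}(t)\to p_n(t)$, and replaces your Scheff\'e step by an elementary tail-truncation using that $(p_n(t))_{n\ge0}$ is a probability law.
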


\bigskip

\begin{proof}
By Kurtz (1969) we have that $p_{N,n}(t)\rightarrow p_{n}(t)$ as $N$
increases. But $(p_{n}(t))_{n\geq 0}$ is a probability law, so for $\epsilon
>0,$ there exists $n(\epsilon )$ such that

$\dsum\limits_{n\geq n(\epsilon )}p_{n}(t)<\epsilon .$ Now let $N(\epsilon )$
be such that $N>N(\epsilon )$ implies that $|p_{N,n}(t)-p_{n}(t)|<\frac{%
\epsilon }{n(\epsilon )}$ for $0\leq n\leq n(\epsilon ).$We then have for $%
C\in \mathcal{E}$ and $N>N(\epsilon )$

\[
|\mu _{t}^{\ast ,N}(C)-\mu _{t}(C)|\leq \dsum\limits_{n=0}^{n(\epsilon
)}|p_{N,n}(t)-p_{n}(t)|+2\epsilon \leq 3\epsilon 
\]%
since $\frac{1}{\#_{m}(n)}\dsum\limits_{A_{n}\in \mathbb{A}_{n}}\mu ^{\circ
_{m}A_{n}}(C)\leq 1$ and ($p_{N,n}(t))_{n\geq 0}$ are probability laws. Our
claim is proved.
\end{proof}

\begin{lemma}
$p_{t}(n)=\frac{\#_{m}(n)}{(m-1)^{n}n!}e^{-t}(1-e^{-(m-1)t})^{n}$
\end{lemma}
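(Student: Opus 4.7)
The plan is to verify the proposed formula directly, using the fact that the Kolmogorov system $(4)$ is a triangular linear system and therefore has a unique solution once the initial conditions of a pure-birth process starting at $0$ (i.e.\ $p_{0}(0)=1$ and $p_{0}(n)=0$ for $n\geq 1$) are prescribed. So it suffices to check that the candidate $p_{t}(n)=\frac{\#_{m}(n)}{(m-1)^{n}n!}e^{-t}(1-e^{-(m-1)t})^{n}$ meets these initial conditions and satisfies each ODE in $(4)$.

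First I would verify the initial conditions: with the convention that the empty product $\#_{m}(0)=1$, one has $p_{0}(0)=e^{0}\cdot 1 =1$, while for $n\geq 1$ the factor $(1-e^{-(m-1)t})^{n}$ vanishes at $t=0$. The equation for $n=0$ is immediate since $\frac{d}{dt}e^{-t}=-e^{-t}=-p_{t}(0)$.

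For $n\geq 1$, I would simply differentiate. Writing $u(t)=1-e^{-(m-1)t}$ so that $u'(t)=(m-1)e^{-(m-1)t}=(m-1)(1-u(t))$, a direct computation gives
\[
\frac{d}{dt}\!\left[e^{-t}u(t)^{n}\right]
= -(1+n(m-1))\,e^{-t}u(t)^{n}+n(m-1)\,e^{-t}u(t)^{n-1},
\]
where I used $e^{-(m-1)t}=1-u(t)$ to split the middle term and absorb one piece into the $(m-1)n$ factor. Multiplying through by $\frac{\#_{m}(n)}{(m-1)^{n}n!}$ yields
\[
\frac{dp_{t}(n)}{dt}=-((m-1)n+1)\,p_{t}(n)+\frac{\#_{m}(n)}{(m-1)^{n-1}(n-1)!}\,e^{-t}u(t)^{n-1}.
\]
This matches $(4)$ precisely when the coefficient of $e^{-t}u(t)^{n-1}$ equals $((m-1)(n-1)+1)\cdot\frac{\#_{m}(n-1)}{(m-1)^{n-1}(n-1)!}$, i.e.\ when
\[
\#_{m}(n)=((m-1)(n-1)+1)\cdot \#_{m}(n-1),
\]
which is true by definition of $\#_{m}(n)=\prod_{k=1}^{n-1}((m-1)k+1)$.

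There is no real obstacle here; the only care needed is the bookkeeping with the coefficient $\#_{m}(n)$ and its one-step recursion. The structural reason the formula works is that $\#_{m}(n)/(m-1)^{n}n!$ is exactly the factor needed so that the birth rate $((m-1)(n-1)+1)$ of the pure-birth process gets absorbed into the ratio $\#_{m}(n)/\#_{m}(n-1)$, leaving behind the clean exponential-times-geometric shape $e^{-t}(1-e^{-(m-1)t})^{n}$.
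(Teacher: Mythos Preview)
Your proof is correct. You verify the formula by direct differentiation and invoke uniqueness of the solution to the triangular linear system $(4)$; the key identity $\#_{m}(n)=((m-1)(n-1)+1)\,\#_{m}(n-1)$ is exactly what makes the coefficient of $p_{t}(n-1)$ come out right, and you handle it cleanly.

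The paper's argument is the mirror image of yours: instead of differentiating the candidate and checking the ODE, it solves $(4)$ inductively via the integrating factor, writing
\[
p_{t}(n)=((m-1)(n-1)+1)\,e^{-((m-1)n+1)t}\int_{0}^{t}e^{((m-1)n+1)s}p_{s}(n-1)\,ds,
\]
and then recognizes the integrand $e^{(m-1)s}(e^{(m-1)s}-1)^{n-1}$ as the derivative of $\frac{1}{(m-1)n}(e^{(m-1)s}-1)^{n}$. Both routes rest on the same one-step recursion for $\#_{m}(n)$; the paper's is mildly more constructive (it would find the formula if you did not already have it), while yours is shorter and avoids the integral computation once the answer is in hand.
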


\bigskip

\begin{proof}
\bigskip We need to solve the affine Kolmogorov system of equations $(4)$.

Proceeding by induction we have:\bigskip 
\begin{eqnarray*}
\frac{dp_{t}(0)}{dt} &=&e^{-t} \\
\frac{dp_{t}(n)}{dt} &=&((m-1)(n-1)+1)e^{-(n(m-1)+1)t}\dint%
\limits_{0}^{t}e^{(n(m-1)+1)s}p_{s}(n-1)ds\text{ \ }
\end{eqnarray*}%
To prove the lemma it suffices to note that $\#_{m}(n)=%
\#_{m}(n-1)((n-1)(m-1)+1)$ and that $%
e^{(n(m-1)+1)s}e^{-s}(1-e^{-(m-1)s})^{n-1}=e^{(m-1)s}(e^{(m-1)s}-1)^{n-1}$
is the derivative of $\frac{1}{(m-1)n}(e^{(m-1)s}-1)^{n}.$
\end{proof}

And this shows that the limit law of $P$ is indeed the extended\ Wild sum
which we have shown to be the solution of the ODE associated to the
interacting system .

\textbf{Acknowledgements. }The first author would like to thank the Universit%
\'{e} de Sherbrooke and its Facult\'{e} d'administration for their startup
grant.

\section{References}

\begin{enumerate}
\item Bassetti, F., Ladelli, L. and Toscani, G. (2011). Kinetic Models with
randomly perturbed Binary Collisions. J. Stats. Physics, vol. \textbf{142},
no. 4, 686-709.

\item Duffie, D. (2012). Dark Markets: Asset Pricing and Information
Percolation in Over-the-Counter Markets. Princeton Lecture Series.

\item Duffie, D. and Manso, G. (2007). Information Percolation in Large
Markets. American Economic Review, Papers and Proceedings, \textbf{97},
203-209.

\item Duffie, D., Giroux, G. and Manso, G. (2010). Information percolation,
American Economic Journal: Microeconomics, vol. \textbf{2}, 100-111.

\item Duffie, D., Malamud, S. and Manso, G.(2009). Information percolation
with equilibrium search dynamics. Econometrica, vol. \textbf{77}, no. 5,
1513-1574.

\item Duffie, D. and Sun, Y. (2012). The Exact Law of Large Numbers for
Independent Random Matching. Journal of Economic Theory, forthcoming.

\item D\"{u}ring, B., Matthes, D. and Toscani, G. (2008). Kinetic Equations
modeling Wealth Redistribution. Phys. Rev. E \textbf{78}, 056103.

\item Ferland, R. and Giroux, G. (1991). An exponential Rate of Convergence
for a Class of Boltzmann Processes, Stochastics and Stochastics Reports, 
\textbf{35}, 79-91.

\item Kac, M. (1956). Foundations of kinetic theory. Proceedings of the
Third Berkeley Symposium on Mathematical Statistics and Probability,
1954--1955, vol. \textbf{III}, pp. 171--197. University of California Press,
Berkeley and Los Angeles.

\item Kurtz, T. G.(1969). A Note on sequences of continuous parameter Markov
chains. Ann. Math. Statist., \textbf{40}, 1078-1082.

\item Lefebvre, M. (2006). Applied Stochastic Processes. Springer.

\item R\"{o}sler, U. (1992). A fixed point theorem for distributions.
Stochastic Processes and their Applications, \textbf{42}, 195-214.

\item Trotter, H. (1959). An elementary proof of the central limit theorem,
Arch. Math (Basel)\textit{, }\textbf{10}, 226-234.

\item Wild, E (1951). On the Boltzmann equation in the kinetic theory of
gases. Proc. Cambridge Phil. Soc. \textbf{47}, 602-609.\bigskip 

$%
\begin{array}{lllll}
\text{Alain B\'{e}langer,} &  &  &  & \text{Gaston Giroux,} \\ 
\text{D\'{e}partement de finance,} &  &  &  & \text{410 rue de Vimy, apt. 1,}
\\ 
\text{Universit\'{e} de Sherbrooke,} &  &  &  & \text{Sherbrooke, Canada,
J1J 3M9} \\ 
\text{2 500 boul. de l'Universit\'{e},} &  &  &  & \text{E-address:
gasgiroux@hotmail.com} \\ 
\text{Sherbrooke, Canada, J1K 2R1} &  &  &  &  \\ 
\text{E-address: alain.a.belanger@usherbrooke.ca} &  & \text{ \ \ } &  &  \\ 
&  &  &  & 
\end{array}%
$
\end{enumerate}

\end{document}